\newtheorem{theorem}{Theorem}
\newtheorem{lemma}[theorem]{Lemma}
\newtheorem{defi}[theorem]{Definition}
\newtheorem{example}[theorem]{Example}
\def\qed{\quad\rule{1ex}{1ex}}
\let\set\mathbbm
\def\lc{\operatorname{lc}}
\begin{document}

\title{Desingularization~Explains Order-Degree~Curves for~Ore~Operators}

\numberofauthors{4}

\author{%
 \alignauthor
 \leavevmode
 \mathstrut Shaoshi Chen\titlenote{Supported by the National Science Foundation (NSF) grant CCF-1017217.}\\[\smallskipamount]
  \affaddr{\leavevmode\mathstrut Dept. of Mathematics / NCSU}\\
  \affaddr{\leavevmode\mathstrut Raleigh, NC 27695, USA}
 \and \mathstrut Maximilian Jaroschek\titlenote{Supported by the Austrian Science Fund (FWF) grant Y464-N18.}\\[\smallskipamount]
  \affaddr{\mathstrut RISC / Joh. Kepler University}\\
  \affaddr{\mathstrut 4040 Linz, Austria}
 \and
 \mathstrut Manuel Kauers\raisebox{.43em}{\normalsize$^{\textstyle\dagger}$}\\[\smallskipamount]
  \affaddr{\mathstrut RISC / Joh. Kepler University}\\
  \affaddr{\mathstrut 4040 Linz, Austria}
 \and
 \mathstrut Michael F. Singer\raisebox{.43em}{\normalsize$^{\textstyle\ast}$}\\[\smallskipamount]
  \affaddr{\mathstrut Dept. of Mathematics / NCSU}\\
  \affaddr{\mathstrut Raleigh, NC 27695, USA}\\[\smallskipamount]
}

\maketitle
\begin{abstract}
    Desingularization is the problem of finding a left multiple of a given Ore
    operator in which some factor of the leading coefficient of the original
    operator is removed. 
    An order-degree curve for a given Ore operator is a curve in the $(r,d)$-plane 
    such that for all points $(r,d)$ above this curve, there exists a left 
    multiple of order~$r$ and degree~$d$ of the given operator.
    We give a new proof of a desingularization result by Abramov and van Hoeij
    for the shift case, and show how desingularization implies order-degree curves
    which are extremely accurate in examples. 
\end{abstract}


\category{I.1.2}{Computing Methodologies}{Symbolic and Algebraic Manipulation}[Algorithms]


\terms{Algorithms}


\keywords{Ore Operators, Singular Points}


\section{Introduction}

We consider linear operators of the form
\[
  L = \ell_0 + \ell_1\partial + \cdots + \ell_r\partial^r,
\]
where $\ell_0,\dots,\ell_r$ are polynomials or rational functions in~$x$, and $\partial$ denotes, for
instance, the derivation $\frac d{dx}$ or the shift operator $x\mapsto x+1$.
(Formal definitions are given later.)
Operators act in a natural way on functions. They are used in computer
algebra to represent the functions~$f$ which they annihilate, i.e., $L\cdot
f=0$.

Multiplication of operators is defined in such a way that the product of 
two operators acts on a function like the two operators one after the other:
$(PL)\cdot f = P\cdot(L\cdot f)$. Therefore, if $L$ is an annihilating operator
for some function~$f$, and if $P$ is any other operator, then $PL$ is also
an annihilating operator for~$f$.


We are interested in turning a given operator $L$ into a ``nicer'' one by
multiplying it from the left by a suitable~$P$, for two different flavors of
``nice''.  First, we consider the problem of removing factors from the leading
coefficient $\ell_r$ of~$L$. This is known as desingularization and it is needed
for computing the values of $f$ at the roots of $\ell_r$ (provided it is defined
there). Desingularization of differential operators is classical~\cite{ince26}, and
for difference operators, Abramov and van Hoeij~\cite{abramov99b,abramov06b} give an algorithm
for doing it. We give below a new proof of (a slightly generalized version of) 
their results. 


Secondly, we consider the problem of producing left multiples with polynomial
coefficients of low degree. Unlike the situation for commutative polynomials, a left multiple
$PL$ of $L$ may have polynomial coefficients even if $P$ has rational function
coefficients with nontrivial denominators and the polynomial coefficients of $L$ have
no common factors. In such situations, it may happen that the degrees of the
polynomial coefficients in $PL$ are strictly less than those in~$L$. This
phenomenon can be exploited in the design of fast algorithms because a small
increase of the order can allow for a large decrease in degree and therefore
yield a smaller total size of the operator (``trading order for
degree''). Degree estimates supporting this technique have been recently given
for a number of different computational problems~\cite{bostan07,chen12b,chen12c,bostan12b}. Although
limited to special situations, these estimates can overshoot by quite a lot. Below
we derive a general estimate for the relation between orders and degrees of left
multiples of a given operator~$L$ from the results about desingularization. This
estimate is independent of the context from which the operator~$L$ arose, and it
is fairly accurate in examples.

\section{Overview}\label{sec:2}



Before discussing the general case, let us illustrate the concepts of desingularization and 
trading order for degree on a concrete example. 
Consider the differential operator
\begin{alignat*}1
& L = -(45 + 25 x - 35 x^2 - x^3 + 2 x^4)\\
&\quad{}+ 2(33 - 9 x - 3 x^2 - x^3) \partial\\
&\qquad{}(1 + x)(23 - 20 x - x^2 + 2 x^3) \partial^2\in\set Q[x][\partial],
\end{alignat*}
where $\partial=\frac d{dx}$. 
That $L$ is desingularizable at (a root of) $p:=23 - 20 x - x^2 + 2
x^3$ means that there is some other operator $P\in\set Q(x)[\partial]$ 
such that $PL$ has coefficients in $\set Q[x]$ and its leading coefficient no longer
contains~$p$ as factor. Such a $P$ is called a
desingulariz\emph{ing} operator for $L$ at~$p$ and $PL$ the
corresponding desingulariz\emph{ed} operator. In our example,
\begin{alignat*}1
 P = \frac{299}{p} \partial+\frac{1035-104 x-136 x^2}p\in\set Q(x)[\partial]
\end{alignat*}
is a desingularizing operator for $L$ at~$p$, the desingularized operator is
\begin{alignat*}1
&PL=(-2350-2055 x+104 x^2+136 x^3)\\
&\quad{}+(2151+281 x+136 x^2)\partial\\
&\qquad{}+ (1932+931 x-240 x^2-136 x^3)\partial^2 + 299 (1+x)\partial^3.
\end{alignat*}
A desingularizing operator need not exist. For example, it is impossible to
remove the factor $x+1$ from the leading coefficient of $L$ by means of
desingularization. In Section~\ref{sec:desing} we explain how to check for a
given operator $L$ and a factor $p$ of its leading coefficient whether a
desingularizing operator exists, and if so, how to compute it.

Desingularization causes a degree drop in the leading coefficient but may affect
the other coefficients of the operator in an arbitrary fashion. However, a
desingularizing operator can be turned into an operator which lowers the degrees
of all the coefficients. To this end, multiply $P$ from the left by some
polynomial $q\in\set Q[x]$ for which the coefficients of $pqP$ have low degree
modulo~$p$, i.e., for which $qP=\frac1pP_1+P_2$ where $P_1,P_2\in\set
Q[x][\partial]$ and $P_1$ has low degree coefficients. In our example, a good choice is
$q=(-43+34x)/299$, i.e.
\[
  P_1 = (-22x+29)+(-43 + 34x)\partial,\quad P_2=-\tfrac{2312}{299}.
\]
Since $PL$ has polynomial coefficients, so does 
\begin{alignat*}1
  \frac1pP_1L &= qPL - P_2L\\
  &= (-10-165 x+22 x^2) + (201+65 x-34 x^2)\partial\\
  &\quad{} + (-100+109 x-22 x^2)\partial^2
  - (1 + x) (43 - 34 x)\partial^3.
\end{alignat*}
This operator has degree $\deg_x(L) + \deg_x(P_1) - \deg_x(p)=2$, compared to $\deg_x
(L)+\deg_x(P)=3$ achieved with the original desingularizing operator. There is
no left multiple of $L$ of order~3 and degree less than~2. There is also none of order~4,
but there does exist a left multiple of degree~1 and order~5. It can be obtained from~$P$ by
multiplying from the left by an operator $q_0+q_1\partial+q_2\partial^2\in\set Q[x][\partial]$ of order~2 for which the 
coefficients of $p^3 (q_0+q_1\partial+q_2\partial^2) P$ have low degrees modulo~$p^3$: Taking
\[
  q_0=\tfrac{633+64 x-88 x^2}{89401},\quad
  q_1=\tfrac{8(17 x^2+13x-92)}{89401},\quad
  q_2=\tfrac{1}{299}
\]
we have $(q_0+q_1\partial+q_2\partial^2)P=\frac1{p^3}Q_1 + Q_2$, where
\begin{alignat*}1
  Q_1 &= (841+580 x-436 x^2-148 x^3+59 x^4+12 x^5-4 x^6)\\
      &\quad{}+ (1697-528 x -752 x^2+120 x^3+127 x^4-12 x^5\\
      &\qquad{}-4 x^6)\partial+ (x-7)(-9+x+2 x^2)p\, \partial^2 + p^2\partial^3,\\
  Q_2 &= \tfrac{16 (779+374 x)}{89401}-\tfrac{272 (69+34 x)}{89401}\partial.
\end{alignat*}
Set $Q:=\frac1{p^3}Q_1$. Then, since $PL$ has polynomial coefficients, so does
\begin{alignat*}1
QL&=(q_0+q_1\partial+q_2\partial^2)PL-Q_2L\\
&=(2+x)
 +(-3+x)\partial
 -(8+2 x)\partial^2\\
&\quad{}
 +(2-2 x)\partial^3
 +(6+x)\partial^4
 +(1+x)\partial^5.
\end{alignat*}
Its degree is $\deg_x(L) + \deg_x(Q_1) - 3\deg_x(p)=1$. 

As the factor $x+1$ cannot be removed from~$L$, we cannot hope to reduce the
degree even further.  We have thus found that the region of all points
$(r,d)\in\set N^2$ such that there is a left $\set Q(x)[\partial]$-multiple of
$L$ of order $r$ and with polynomial coefficients of degree at most~$d$ is given
by $((2,4)+\set N^2)\cup((3,2)+\set N^2) \cup((5,1)+\set N^2)$.

In Section~\ref{sec:curve} we explain the construction of the operators $Q$ that
turn a desingularizing operator into one that lowers all the degrees as far as
possible, and we give a formula that describes the points $(r,d)$ for which such
a $Q$ exists.

\section{Partial Desingularization}\label{sec:desing}

In this section we discuss under which circumstances an operator~$L$ admits a
left multiple~$PL$ in which a factor of the leading coefficient of $L$ is
removed. This is of interest in its own right, and will also serve as the starting
point for the construction described in the following section. In view of this
latter application, we cover here a slightly generalized variant of
desingularization, which not only applies to the case where a factor can be
completely removed, but also cases where only the multiplicity of the factor can
be lowered.

\begin{example} 
  In the shift case (i.e., $\partial x = (x+1)\partial$), consider the operator
  \begin{alignat*}1
   L &= (3 + x) (9 + 7 x + x^2)- (33 + 70 x + 47 x^2 + 12 x^3 + x^4)\partial \\
     &\quad{}+ (2 + x)^2 (3 + 5 x + x^2)\partial^2 .
  \end{alignat*}
  The factor $(x+2)^2$ in the leading coefficient cannot be removed completely. Yet we can find 
  a multiple in which $x+2$ appears in the leading coefficient (in shifted form) with multiplicity
  one only. One such left multiple of $L$ is 
  \begin{alignat*}1
   &(402+208x+25x^2)
   -(514+743x+258x^2+25x^3)\partial\\
   &\quad{}
   +(233+378x+183x^2+25x^3)\partial^2
   -9(3+x)\partial^3.
  \end{alignat*}
\end{example}

We speak in this case of a partial desingularization. The general definition is
as follows. We formulate it for operators in an arbitrary Ore algebra $\set
O:=A[\partial]:=A[\partial;\sigma, \delta]$ where $A$ is a $\set K$-algebra (in
our case typically $A=\set K[x]$ or $A=\set K(x)$), $\set K$ is a field,
$\sigma\colon A\to A$ is an automorphism and $\delta\colon A\to A$ a
$\sigma$-derivation, i.e., a $\set K$-linear map satisfying the skew Leibniz
rule $\delta(pq) = \delta(p)q + \sigma(p)\delta(q)$ for $p,q\in A$.
For any~$f\in A$, the multiplication rule in~$A[\partial;\sigma, \delta]$ 
is~$\partial f = \sigma(f) \partial + \delta(f)$.
We write $\deg_\partial(L)$ for the \emph{order} of $L\in A[\partial]$, and
if $A=\set K[x]$, we write $\deg_x(L)$ for the maximum degree among the
polynomial coefficients of~$L$. For general information about Ore algebras, see~\cite{bronstein96}. 

\begin{defi}\label{def:remove}
  Let $L\in\set K[x][\partial;\sigma,\delta]$ and let $p\in\set K[x]$ be 
  such that $p\mid\lc_\partial(L)\in\set K[x]$. We say that $p$ is 
  \emph{removable} from $L$ at order~$n$ if there exists some
  $P\in\set K(x)[\partial]$ with $\deg_\partial(P)=n$ 
  and some $w,v\in\set K[x]$ with $\gcd(p,w)=1$ such that
  $PL\in\set K[x][\partial]$ and $\sigma^{-n}(\lc_\partial(PL))=\frac{w}{vp}\lc_\partial(L)$.
  We then call $P$ a $p$-removing operator for~$L$, and $PL$ the corresponding $p$-removed operator.
  $p$~is simply called \emph{removable} from $L$ if it is removable at order~$n$ for some $n\in\set N$.

  If $\gcd(p,\lc_\partial(L)/p)=1$, we say \emph{desingulariz[able|ing|ed]}
  instead of remov$[$able|ing|ed\/$]$, respectively.
\end{defi}

The backwards shift $\sigma^{-n}$ in the definition above is introduced in order
to compensate the effect of the term $\partial^n$ in $P$ on the leading coefficient
on $L$ (i.e., $\lc_\partial(\partial^n L)=\sigma^n(\lc_\partial(L))$.)
Moreover, observe that in this
definition, removing a polynomial $p$ does not necessarily mean that the
$p$-removed operator has no roots of (some shift of) $p$ in its leading
coefficient.  If $L$ contains some factors of higher multiplicity, as in the
example above, then removal of a polynomial is defined so as to respect
multiplicities. Also observe that in the definition we allow that some new
factors $w$ are introduced when~$p$ is removed. This is only a matter of
convenience.  We will see below that we may always assume $v=w=1$, i.e., if
something can be removed at the cost of introducing new factors into the leading
coefficient, then it can also be removed without introducing new factors. The
justification rests on the following lemma.

\begin{lemma}\label{lem:1}
  Let $L\in \set K[x][\partial;\sigma,\delta]$, let $p\in\set K[x]$ with $p\mid\lc_\partial(L)$
  be removable from~$L$, and let $P\in\set K(x)[\partial;\sigma,\delta]$ be a $p$-removing operator for~$L$ with $\deg_\partial(P) = n$.
  \begin{enumerate}
  \item\label{lem:1:2} If $U\in \set K[x][\partial]$ with $\gcd(\lc_\partial(U),\sigma^{n+\deg_\partial(U)}(p))=1$, 
    then $UP$ is also a $p$-removing operator for~$L$.
  \item\label{lem:1:3} If $P=P_1+P_2$ for some $P_1\in\set K(x)[\partial]$ with $\deg_\partial(P_1) = n$ and $P_2\in\set K[x][\partial]$,
    then $P_1$ is also a $p$-removing operator for~$L$.
  \item\label{lem:1:4} There exists a $p$-removing operator $P'$ with $\deg_\partial(P') = n$ and with 
    $p\sigma^{-n}(\lc_\partial(P'L))=\lc_\partial(L)$.
  \end{enumerate}
\end{lemma}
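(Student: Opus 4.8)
Throughout I would lean on the Ore product rule for leading coefficients: for $A,B\in\set K(x)[\partial]$ one has $\lc_\partial(AB)=\lc_\partial(A)\,\sigma^{\deg_\partial(A)}(\lc_\partial(B))$, with no cancellation because $\set K(x)$ is a domain and $\sigma$ is injective. Applying it with $B=L$ and $\deg_\partial(P)=n$, the defining equation $\sigma^{-n}(\lc_\partial(PL))=\frac{w}{vp}\lc_\partial(L)$ simplifies, after cancelling the nonzero factor $\lc_\partial(L)$, to the single condition $\sigma^{-n}(\lc_\partial(P))=\frac{w}{vp}$. Recording this reformulation first reduces each of the three claims to a statement about one leading coefficient.

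For part~\ref{lem:1:2}, set $k=\deg_\partial(U)$. Then $UPL=U(PL)\in\set K[x][\partial]$ since both factors lie in $\set K[x][\partial]$, and $\deg_\partial(UP)=n+k$ by the product rule. Computing $\lc_\partial(UPL)=\lc_\partial(U)\,\sigma^{k}(\lc_\partial(PL))$ and applying $\sigma^{-(n+k)}$ shows that $UP$ removes $p$ at order $n+k$ with numerator $\tilde w=\sigma^{-(n+k)}(\lc_\partial(U))\,w$; since $\gcd(p,w)=1$ it suffices that $\gcd(p,\sigma^{-(n+k)}(\lc_\partial(U)))=1$, which after applying the automorphism $\sigma^{n+k}$ is exactly the hypothesis $\gcd(\sigma^{n+k}(p),\lc_\partial(U))=1$. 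For part~\ref{lem:1:3}, $P_1L=PL-P_2L\in\set K[x][\partial]$ because $P_2\in\set K[x][\partial]$; comparing coefficients of $\partial^{n}$ gives $\lc_\partial(P_1)=\lc_\partial(P)-c$ with $c\in\set K[x]$, whence $\sigma^{-n}(\lc_\partial(P_1))=\frac{w-\sigma^{-n}(c)\,vp}{vp}$, a fraction whose numerator is still coprime to $p$ because $p\mid\sigma^{-n}(c)\,vp$. Both parts are thus direct leading-coefficient computations.

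Part~\ref{lem:1:4} is where the work lies. The operations furnished by \ref{lem:1:2} and \ref{lem:1:3}, namely left multiplication by a polynomial operator and addition of a polynomial operator, only alter the numerator of $\frac{w}{vp}$ by polynomial multiples and cannot shrink the denominator, so a genuinely new idea is required. Writing $m=\deg_\partial(L)$ and $\ell=\lc_\partial(L)$, I would consider the set $I$ of all $\sigma^{-n}(\lambda_Q)\in\set K[x]$, where $Q$ ranges over operators in $\set K(x)[\partial]$ with $\deg_\partial(Q)\le n$ and $QL\in\set K[x][\partial]$, and $\lambda_Q\in\set K[x]$ denotes the coefficient of $\partial^{n+m}$ in $QL$. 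Using that $a\mapsto\sigma^{-n}(a)$ is an automorphism of $\set K[x]$, one checks that $I$ is closed under addition and under multiplication by $\set K[x]$, so $I=(h)$ is principal. Taking $Q=\partial^n$ gives $\ell\in I$, hence $h\mid\ell$; taking $Q=P$ gives $\frac{w}{vp}\ell\in I$, hence $h\mid\frac{w}{vp}\ell$.

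To conclude I would show $\frac1p\ell\in I$. Since $h\mid\ell$ and $h\mid\frac{w}{vp}\ell$, at every irreducible $\pi$ we have $v_\pi(h)\le\min\{v_\pi(\ell),\,v_\pi(\frac{w}{vp}\ell)\}$, and a short valuation count using $\gcd(w,p)=1$ shows this minimum is at most $v_\pi(\frac1p\ell)$ both for $\pi\mid p$ and for $\pi\nmid p$; hence $h\mid\frac1p\ell$ and $\frac1p\ell\in I$. The corresponding $Q=:P'$ then has $\partial^n$-coefficient $\sigma^n(1/p)\ne0$, so $\deg_\partial(P')=n$ and $\sigma^{-n}(\lc_\partial(P'L))=\frac1p\ell$, that is $p\,\sigma^{-n}(\lc_\partial(P'L))=\lc_\partial(L)$, giving a $p$-removing operator with $v=w=1$. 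I expect the main obstacle to be the valuation bookkeeping at the factors of $p$, together with the observation that \ref{lem:1:2}–\ref{lem:1:3} alone cannot lower the denominator, which is exactly what forces the ideal-theoretic argument.
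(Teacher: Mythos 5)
Your proofs of the first two items coincide with the paper's: both are the same leading-coefficient computations via $\lc_\partial(AB)=\lc_\partial(A)\sigma^{\deg_\partial(A)}(\lc_\partial(B))$, and they are correct. For the third item you take a genuinely different route. The paper stays entirely within the toolbox of items 1 and~2: it applies the extended Euclidean algorithm to write $1=sw+tpv$, left-multiplies $P$ by $\sigma^n(s)$ (item~1, since $\gcd(s,p)=1$ follows from the Bezout relation), observes that the leading coefficient becomes $\sigma^n(sw/(pv))=1/\sigma^n(pv)-\sigma^n(t)$, and strips the polynomial part by item~2. So your remark that items 1 and~2 ``cannot shrink the denominator'' and that ``a genuinely new idea is required'' is an overstatement --- the Bezout trick is exactly such a combination --- although you are right that, taken literally, that route lands on denominator $\sigma^n(pv)$ rather than $\sigma^n(p)$ and needs either $v\in\set K$, a lowest-terms normalization with $\gcd(v,p)=1$, or one further multiplication by $\sigma^n(v)$ to match the stated conclusion $p\,\sigma^{-n}(\lc_\partial(P'L))=\lc_\partial(L)$ exactly. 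Your alternative --- forming the set $I$ of normalized top coefficients of $QL$ over all admissible $Q$ of order at most $n$, checking it is an ideal of $\set K[x]$, and locating $\ell/p$ in $I=(h)$ by the valuation comparison of $\ell$ and $\tfrac{w}{vp}\ell$ at primes dividing and not dividing $p$ --- is correct (the nonvanishing of $\ell/p$ forces $\deg_\partial(P')=n$ and $\lc_\partial(P')=1/\sigma^n(p)$), and it delivers $v=w=1$ cleanly without any coprimality caveat on $v$. What you trade away is constructiveness: the paper's argument produces $P'$ explicitly from $s$ and $t$, which matters for the algorithmic use of the lemma later in the paper, whereas yours is a pure existence argument via principality.
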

\begin{proof} Let $v,w\in\set K[x]$ be as in Definition~\ref{def:remove}, i.e., 
  $\gcd(p,w)=1$ and $vp\sigma^{-n}(\lc_\partial(PL))=w\lc_\partial(L)$.

  \ref{lem:1:2}. Since $PL$ is an operator with polynomial coefficients, so is~$UPL$.
  Furthermore, with $u=\lc_\partial(U)$ and $m=\deg_\partial(U)$ we have
  \[
    vp\sigma^{-n-m}(\lc_\partial(UPL))=\sigma^{-n-m}(u)w\lc_\partial(L).
  \]
  Since $\gcd(u,\sigma^{n+m}(p))=1$, we have $\gcd(\sigma^{-n-m}(u)w, p)=1$, as required. 

  \ref{lem:1:3}. Clearly, $P_2\in\set K[x][\partial]$ implies $P_2L\in\set
  K[x][\partial]$. Since also $PL\in\set K[x][\partial]$, it follows that
\[ 
  P_1L = (P-P_2)L=PL-P_2L\in\set K[x][\partial].
\]
If $\deg_\partial(P_2)<n$, then we have $\lc_\partial(PL)=\lc_\partial(P_1L)$,
so there is nothing else to show. If $\deg_\partial(P_2)=n$, then
$\lc_\partial(P_1L)=\lc_\partial(PL)-\lc_\partial(P_2L)$ and therefore
\begin{alignat*}1
  vp\sigma^{-n}(\lc_\partial(P_1L))
 &=vp\sigma^{-n}(\lc_\partial(PL)-\lc_\partial(P_2L))\\
 &=(w-vp\sigma^{-n}(\lc_\partial(P_2)))\lc_\partial(L).
\end{alignat*}
Since $\gcd(p, w-vp\sigma^{-n}(\lc_\partial(P_2)))=\gcd(p,w)=1$, the claim follows.

  \ref{lem:1:4}. By the extended Euclidean algorithm we can find $s,t\in\set K[x]$ with
$1 = s w + t p v$. 
Then $\sigma^n(s)P$ is $p$-removing of order~$n$ by part~\ref{lem:1:2}
($\sigma^{n}(s)$~is obviously coprime to~$\sigma^n(p)$), and its leading coefficient is
\[
 \sigma^n\Bigl(\frac{sw}{pv}\Bigr) =\frac1{\sigma^n(pv)} - \sigma^n(t).
\]
By part~\ref{lem:1:3} we may discard the polynomial part~$\sigma^n(t)$, obtaining a $p$-removing operator~$P'$ with
the desired property. 
\end{proof}

The lemma implies that if there is a $p$-removing operator at all, then
there is also one in which all the denominators are powers of $\sigma^n(p)$
(because any factors coprime with $p$ can be cleared according to part~\ref{lem:1:2}), 
and where all numerators have smaller degree than the corresponding denominators 
(because polynomial parts can be removed according to part~\ref{lem:1:3}). 

Similarly as in the proof of part~\ref{lem:1:4}, we can also reduce the problem of removing
a composite polynomial to the problem of removing powers of irreducible polynomials.
For example, if $p=p_1p_2$ is removable from~$L$, where $p_1,p_2\in\set K[x]$ are coprime, 
then obviously both $p_1$ and $p_2$ are removable. Conversely, if $p_1$ and 
$p_2$ are removable, and if $P_1,P_2$ are removing operators of orders~$n_1,n_2$
with $\lc_\partial(P_1)=1/\sigma^{n_1}(p_1)$ and
$\lc_\partial(P_2)=1/\sigma^{n_1}(p_2)$, then for $n=\max\{n_1,n_2\}$ and $u_1,u_2\in\set K[x]$ with
\[
 u_1\sigma^n(p_2)+u_2\sigma^n(p_1)=\gcd(\sigma^n(p_1),\sigma^n(p_2))=1
\] 
the operator $P:=u_1\partial^{n-n_1}P_1+u_2\partial^{n-n_2}P_2\in\set
K(x)[\partial]$ is such that $PL\in\set K[x][\partial]$ and
$\lc_\partial(PL)=\lc_\partial(L)/\sigma^n(p)$.

In summary, in order to determine whether a polynomial
$p=p_1^{k_1}p_2^{k_2}\cdots p_m^{k_m}$ is removable from an operator~$L$, it
suffices to be able to check for an irreducible polynomial $p_i$ and a given
$k_i\geq1$ whether $p_i^{k_i}$ is removable.  Let now $p$ be an irreducible
polynomial and $k\geq1$. If there exists a $p^k$-removing operator, then it can
be assumed to be of the form
\[
  P = \frac{p_0}{\sigma^n(p)^{e_0}} + 
      \frac{p_1}{\sigma^n(p)^{e_1}}\partial + \cdots +
      \frac{p_{n-1}}{\sigma^n(p)^{e_{n-1}}}\partial^{n-1} + 
      \frac{1}{\sigma^n(p)^k}\partial^n,
\]
for some $e_0,\dots,e_{n-1}\in\set N$, and $p_0,\dots,p_{n-1}\in\set K[x]$ with
$\deg_x(p_i)< e_i\deg_x(p)$. In order to decide whether such an operator exists,  
it is now enough to know a bound on $n$ as well as a bound~$e$ on the
exponents~$e_i$, for if $n$ and $e$ are known, we can make an ansatz $p_i =
\sum_{j=0}^{e-1} p_{i,j} x^j$ with undetermined coefficients~$p_{i,j}$, then
calculate $PL$ and rewrite all its coefficients in the form $a/\sigma^n(p)^e +
b$ for some polynomials $a,b$ depending linearly on the undetermined $p_{i,j}$,
then compare the coefficients of the various $a$'s with respect to $x$ to zero
and solve the resulting linearly system for the~$p_{i,j}$.

How the bounds on $n$ and $e$ are derived depends on the particular Ore
algebra at hand. In this paper, we give a complete treatment of the 
shift case ($(\sigma p)(x) = p(x+1)$, $\delta=0$) and make some remarks
about the differential case ($\sigma=\mathrm{id}$, $\delta=\frac{d}{dx}$).
For other cases, see the preprint~\cite{chyzak10}.

\subsection{Shift Case}

In this section, let $\set K[x][\partial]$ denote the Ore algebra of recurrence
operators, i.e., $\sigma$ is the automorphism mapping $x$ to $x+1$ and $\delta$
is the zero map. This case was studied by Abramov and van Hoeij~\cite{abramov99b,abramov06b}. We
give below a new proof of their result, and extend it to the case of partial
desingularization. For consistency with the differential case, we formulate the
result for the leading coefficients, while Abramov and van Hoeij consider the
analogous for the trailing coefficients. Of course, this difference is
immaterial.

We proceed in two steps. First we give a bound on the order of a removing operator (Lemma~\ref{lem:2}),
and then, in a second step, we provide a bound on the exponents in the denominators (Theorem~\ref{thm:1}).
As explained above, it is sufficient to consider the case of removing powers of irreducible 
polynomials, and we restrict to this case. 

\begin{lemma}\label{lem:2}
  Let $L=\ell_0+\ell_1\partial+\cdots+\ell_r\partial^r\in\set K[x][\partial]$ 
  with $\ell_0,\ell_r\neq0$, and let $p$ be an irreducible factor of $\lc_\partial(L)$
  such that $p^k$ is removable from $L$ for some $k\geq1$. 
  Let $n\in\set N$ be s.t.\ $\gcd(\sigma^n(p), \ell_0)\neq 1$ and
  $\gcd(\sigma^m(p), \ell_0)=1$ for all $m>n$. 
  Then $p^k$ is removable at order~$n$ from~$L$.
\end{lemma}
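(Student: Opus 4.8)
The plan is to reduce the whole statement to a single descent step: I will show that whenever $p^k$ is removable from $L$ at some order $N$ with $N>n$, it is already removable at order $N-1$. Granting this, the lemma follows by a two-sided sweep. If I start from a remover of order $N\le n$, part~\ref{lem:1:2} of Lemma~\ref{lem:1} lets me left-multiply by $U=\partial^{\,n-N}$, whose leading coefficient $1$ is trivially coprime to any shift of $p$, producing a remover of order exactly~$n$. If I start from a remover of order $N>n$, I apply the descent step repeatedly; since the pair $(L,p)$, and hence the integer~$n$, is unchanged at each step, the order drops to~$n$ after finitely many steps. (That $n$ exists as a natural number is guaranteed: only finitely many shifts $\sigma^m(p)$ can share a root with $\ell_0$, so the relevant set is bounded; and it is nonempty, since otherwise the descent would run all the way down to order~$0$, where a remover forces $p^k\mid\ell_0$ and thereby exhibits the obstruction $m=0$.)

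For the descent step I first normalize. By the remarks following Lemma~\ref{lem:1}, together with part~\ref{lem:1:4}, I may assume the remover $P=c_0+c_1\partial+\cdots+c_N\partial^N$ has all its denominators equal to powers of $q:=\sigma^N(p)$ and satisfies $\lc_\partial(P)=1/q^k$, so that $\lc_\partial(PL)=\sigma^N(\ell_r/p^k)$. The crucial observation concerns the constant coefficient. Since $\delta=0$, the $\partial^0$-coefficient of $PL$ is exactly $c_0\ell_0$, and this must lie in $\set K[x]$. Writing $c_0=p_0/q^{e_0}$ in lowest terms and comparing $q$-adic valuations, I get $v_q(c_0\ell_0)=-e_0+v_q(\ell_0)$. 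Now $q=\sigma^N(p)$ is irreducible, being a shift of the irreducible~$p$, and because $N>n$ we have $\gcd(\sigma^N(p),\ell_0)=1$, hence $v_q(\ell_0)=0$. Polynomiality of $c_0\ell_0$ then forces $e_0=0$, i.e. $c_0\in\set K[x]$.

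With $c_0$ a polynomial I can lower the order. Factoring $\partial$ out on the right gives $P-c_0=Q\partial$ with $Q=\sum_{i=1}^{N}c_i\partial^{\,i-1}$ of order $N-1$. Since $c_0\in\set K[x]$, the operator $c_0L$ has polynomial coefficients, so $Q\,\partial L=PL-c_0L$ does too; using $\partial L=\sigma(L)\partial$, where $\sigma(L)$ is the operator obtained by applying $\sigma$ to every coefficient of $L$, and the fact that right-multiplication by $\partial$ introduces no denominators, I conclude $Q\sigma(L)\in\set K[x][\partial]$. A short leading-coefficient check then shows that $Q$ is a $\sigma(p)^k$-removing operator for $\sigma(L)$ at order $N-1$. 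Finally I transport this back to $(L,p)$ by conjugation: since $\partial f\partial^{-1}=\sigma(f)$, conjugation by $\partial$ realizes the coefficientwise shift, so $\tilde Q:=\sigma^{-1}(Q)=\partial^{-1}Q\partial$ satisfies $\tilde Q L=\sigma^{-1}(Q\sigma(L))\in\set K[x][\partial]$ and removes $p^k$ from $L$ at order $N-1$, with $\sigma^{-(N-1)}(\lc_\partial(\tilde Q L))=\ell_r/p^k$, as required.

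I expect the main obstacle to be the valuation argument in the second paragraph: it is precisely what pins the attainable order down to the quantity $n$ read off from the \emph{trailing} coefficient~$\ell_0$, and it rests essentially on the normalization supplied by Lemma~\ref{lem:1} (so that $c_0$ can have no denominator other than a power of $\sigma^N(p)$) and on the irreducibility of~$p$. The surrounding order manipulations, namely padding up with $\partial^{\,n-N}$ and the shift bookkeeping in the descent, are routine once one notes that conjugation by $\partial$ implements the shift $\sigma$ on operators and that the integer~$n$ is therefore invariant under the simultaneous replacement $(L,p)\mapsto(\sigma L,\sigma p)$.
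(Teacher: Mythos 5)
Your proof is correct, and its engine is the same computation the paper uses: the trailing coefficient of $PL$ is $c_0\ell_0$ (since $\delta=0$), and because $\gcd(\sigma^N(p),\ell_0)=1$ for $N>n$, polynomiality of $c_0\ell_0$ forces the denominator of $c_0$ to be trivial. Where you differ is in the surrounding organization. The paper fixes one normalized remover $P$ of order $m>n$ and runs a single induction over $i=0,\dots,m-n-1$, concluding at each step that $e_i=0$ and hence (via the normalization $\deg_x(p_i)<e_i\deg_x(p)$) that $p_i=0$ outright; the operator $\partial^{\,n-m}P$ is then the desired remover of order~$n$ in one shot. You instead prove only that $c_0$ is a polynomial, strip it off with Lemma~\ref{lem:1}.\ref{lem:1:3}, factor one $\partial$ out on the right, and conjugate by $\partial$ to pass to $(\sigma(L),\sigma(p))$, iterating the descent. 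Your variant buys a slightly lighter normalization (you never need the degree bound on the numerators, only coprimality with $\sigma^N(p)$, and you never need the later trailing coefficients $t_1,\dots,t_{m-n-1}$) at the cost of the conjugation bookkeeping and of checking that the normalized form and the invariant $n$ are preserved under the shift --- both of which you do, and both of which are immediate since $\sigma^{-1}(Q)$ inherits denominators that are powers of $\sigma^{N-1}(p)$ and $\gcd(\sigma^m(\sigma(p)),\sigma(\ell_0))=\sigma(\gcd(\sigma^m(p),\ell_0))$. The padding step $U=\partial^{\,n-N}$ for removers of order below $n$ and the parenthetical on the existence of $n$ are fine but inessential: the lemma takes $n$ as a hypothesis, and the paper likewise only argues the case $m>n$ explicitly.
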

\begin{proof}
  By assumption on~$L$, there exists a $p^k$-removing operator~$P$, say of order~$m$, and by
  the observations following Lemma~\ref{lem:1} we may assume that 
  \[
    P = \frac{p_0}{\sigma^m(p)^{e_0}} + 
        \frac{p_1}{\sigma^m(p)^{e_1}}\partial + \cdots +
        \frac{p_m}{\sigma^m(p)^{e_m}}\partial^m,
  \]
  for $e_i\in\set N$ and $p_i\in\set K[x]$ with $\deg_x(p_i) < e_i\deg_x(p)$ ($i=0,\dots,m$). We may further assume
  $\gcd(\sigma^m(p), p_i)=1$ for $i=0,\dots,m$ (viz.\ that the $e_i$ are chosen minimally). 

  Suppose that $m>n$. We show by induction that then $e_0=e_1=\cdots=e_{m-n-1}=0$,
  so that $p_i=0$ for $i=0,\dots,m-n-1$, i.e., the operator $P$ has in fact the form 
  \[
    P = \frac{p_{m-n}}{\sigma^m(p)^{e_{m-n}}}\partial^{m-n} + \cdots + \frac{p_m}{\sigma^m(p)^{e_m}}\partial^m.
  \]
  Thus $\partial^{n-m}P\in\set K(x)[\partial]$ is a $p^k$-removing operator of order~$n$.

  Consider the operator $T:=\sum_{i=0}^{r+m}t_i\partial^i:= P L \in\set K[x][\partial]$. 
  {}From $\frac{p_0}{\sigma^m(p)^{e_0}}\ell_0=t_0\in\set K[x]$ it follows that $e_0=0$, because
  \[
    \gcd(\sigma^m(p),p_0)=\gcd(\sigma^m(p),\ell_0)=1
  \]
  by the choice of $p_0$ and the assumption in the lemma, respectively, and this leaves no
  possibility for cancellation. 
  
  Assume now, as induction hypothesis, that $e_0=e_1=\dots=e_{i-1}=0$ for some $i<m-n$. Then from
  \begin{alignat*}1
    t_i &= \frac{p_i}{\sigma^m(p)^{e_i}}\sigma^i(\ell_0) + \frac{p_{i-1}}{\sigma^m(p)^{e_{i-1}}}\sigma^{i-1}(\ell_1)
         + \cdots + \frac{p_0}{\sigma^m(p)^{e_0}}\ell_i \\
        &= \frac{p_i}{\sigma^m(p)^{e_i}}\sigma^i(\ell_0)
  \end{alignat*}
  it follows that $\sigma^m(p)^{e_i}\mid p_i\sigma^i(\ell_0)$.
  By the choice of $p_i$ we have $\gcd(\sigma^m(p), p_i)=1$ and by the assumption in the lemma
  we have $\gcd(\sigma^{m-i}(p), \ell_0)=1$ (because $m-i>n$), so it follows that $e_i=0$.
  Inductively, we obtain $e_0=e_1=\cdots=e_{m-n-1}=0$, which completes the proof. 
\end{proof}

It can be shown that $p$ cannot be removed from $L$ if $\sigma^n(p)$ is coprime
with the trailing coefficient of $L$ for all $n\in\set N$ by a variant of~\cite[Lemma 3.]{abramov99b}, so the above lemma covers all
situations where removing of a factor is possible. 

In order to formulate the result about the possible exponents in the
denominator, it is convenient to first introduce some notation. Let us call two irreducible
polynomials $p,q\in\set K[x]\setminus\{0\}$ \emph{equivalent} if
there exists $n\in\set Z$ such that $\sigma^n(p)/q\in\set K$. 
We write $[q]$ for the equivalence class of~$q\in\set K[x]\setminus\{0\}$.
If $p,q$ are equivalent in this sense, we write $p\leq q$ if $\sigma^n(p)/q\in\set K$
for some $n\geq0$, and $p>q$ otherwise. 

The irreducible factors of a polynomial $u\in\set K[x]$ can be grouped into equivalence
classes, for example
\begin{alignat*}1
  u &= \underbrace{(x-4) (x-1)^3 x (x+1)^2} \underbrace{(2x-5)(2x+3)^2(2x+9)}\\
    &\quad\times\underbrace{(x^2+5x+1)(x^2+11x+25)^3}.
\end{alignat*}
For any monic irreducible factor $p$ of $u\in\set K[x]$, let $v_p(u)$ denote the multiplicity of $p$ in~$u$,
and define
\[
  v_{<p}(u):=\max\{\, v_q(u) \mid q \in [p] : p > q\, \}.
\]
For example, for the particular~$u$ above we have $v_{x-4}(u)=1$, $v_{<x-4}(u)=0$, $v_{<x+1}(u)=3$, and so on.

Besides being applicable not only to desingularization but also removal of any factors, 
the following theorem also refines the corresponding result of Abramov and van Hoeij in so far as
their version only covers the case of desingularizing $L$ at some $p$ with $v_{>p}(\lc_\partial(L))=0$
whereas we do not need this assumption. 

\begin{theorem}\label{thm:1}
  Let $L=\ell_0+\ell_1\partial+\cdots+\ell_r\partial^r\in\set K[x][\partial]$ with $\ell_0,\ell_r\neq0$,
  and let $p$ be an irreducible factor of $\ell_r$ such that $p^k$ is removable from $L$ for some $k\geq1$.
  Let $n\in\set N$ be such that $\gcd(\sigma^n(p),\ell_0)\neq1$ and $\gcd(\sigma^m(p),\ell_0)=1$ for
  all $m>n$. Then there exists a $p^k$-removing operator~$P$ for~$L$ and $p$ of the form 
  \[
    P = \frac{p_0}{\sigma^n(p)^{e_0}} + 
        \frac{p_1}{\sigma^n(p)^{e_1}}\partial + \cdots +
        \frac{p_n}{\sigma^n(p)^{e_n}}\partial^n,
  \]
  for some $e_i\in\set N$ and $p_i\in\set K[x]$ with
  \begin{enumerate}
  \item $\deg_x(p_i) < e_i\deg_x(p)$ and $\gcd(\sigma^n(p), p_i)=1$, and
  \item $e_i\leq k + n\,v_{<p}(\lc_\partial(L))$
  \end{enumerate}
  for $i=0,\dots,n-1$, and $p_n=1$, $e_n=k$. 
\end{theorem}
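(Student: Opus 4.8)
The plan is to reduce the statement to a single‑prime valuation problem and then bound the denominator exponents by analyzing the coefficients of $PL$ along the orbit of~$p$.

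First I would invoke Lemma~\ref{lem:2}: since $p^k$ is removable, it is removable at order~$n$, so there is a $p^k$-removing operator of order exactly~$n$. By the observations following Lemma~\ref{lem:1} I may normalize it to the asserted shape, with all denominators powers of $q:=\sigma^n(p)$, with $\gcd(q,p_i)=1$ and $\deg_x(p_i)<e_i\deg_x(p)$, and with $p_n=1$, $e_n=k$; part~\ref{lem:1:4} also forces $p^k\mid\ell_r$, so the normalized leading coefficient $\sigma^n(\ell_r)/q^k$ is consistent. Conditions~(1) together with $p_n=1,\,e_n=k$ are then built in, and the only thing left to prove is the exponent bound~(2).

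The key simplification is that, because every denominator of $P$ is a power of the single irreducible $q$, a priori each coefficient $t_j$ of $T:=PL$ has denominator a power of~$q$; hence $PL\in\set K[x][\partial]$ is \emph{equivalent} to $v_q(t_j)\ge0$ for all~$j$, where $v_q$ denotes the $q$-adic valuation. Using $\partial^if=\sigma^i(f)\partial^i$ in the shift case, I would write
\[
 t_j=\sum_{i}\frac{p_i}{q^{e_i}}\,\sigma^i(\ell_{j-i}),\qquad v_q\bigl(\sigma^i(\ell_{j-i})\bigr)=v_{\sigma^{n-i}(p)}(\ell_{j-i}).
\]
Thus the pole $q^{-e_i}$ carried by the $i$-th summand of $t_j$ can be neutralized only by the multiplicity of the forward shift $\sigma^{n-i}(p)$ (with $0\le n-i\le n$) in $\ell_{j-i}$, or else by cancellation against other summands of the same $q$-order. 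The exponents are then constrained by this whole system; the low-order coefficients $t_0,t_1,\dots$, which involve the trailing coefficients $\ell_0,\ell_1,\dots$, are exactly what tie the admissible exponents to the integer~$n$, which is why $n$ is defined through~$\ell_0$. As a concrete first instance, $t_0=\frac{p_0}{q^{e_0}}\ell_0\in\set K[x]$ with $\gcd(q,p_0)=1$ already gives $e_0\le v_q(\ell_0)$.

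The hard part is turning this local picture into the sharp constant. To bound the \emph{deepest} surviving pole $q^{-E}$ with $E=\max_i e_i$, I would use that, the numerators being coprime to~$q$, its residue in the field $\set K[x]/(q)$ must vanish in every $t_j$; tracing this cancellation across consecutive~$j$ shows that a pole of a given depth can persist only when, at each of the at most $n$ steps permitted by the order of~$P$, it is fed by a strictly smaller shift of~$p$. Each such step contributes at most $v_{<p}(\lc_\partial(L))$, and together with the base contribution $e_n=k$ this yields $e_i\le k+n\,v_{<p}(\lc_\partial(L))$. What makes this delicate is that the multiplicities appearing \emph{directly} in the valuations above are forward shifts of~$p$ (equivalently, backward shifts of the base~$q$), so a naive per-coefficient estimate produces only the coarser bound with $v_{<q}$ in place of $v_{<p}$. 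Converting it to the sharp constant requires the maximality in the definition of~$n$ (namely $\sigma^m(p)\nmid\ell_0$ for $m>n$) to rule out the intermediate shifts $p,\dots,\sigma^{n-1}(p)$ from contributing, and the removability hypothesis to transfer the bound from the trailing coefficient~$\ell_0$, which governs the deepest denominators, to the leading coefficient~$\ell_r$, in which $v_{<p}(\lc_\partial(L))$ is measured. Carrying out this transfer and the pole‑cancellation bookkeeping cleanly is the crux of the argument.
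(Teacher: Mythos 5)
Your setup matches the paper's: Lemma~\ref{lem:2} gives removability at order exactly~$n$, and the normalizations following Lemma~\ref{lem:1} give the stated form together with condition~(1), $p_n=1$, $e_n=k$. The gap is entirely in item~(2), which you yourself flag as ``the crux'' and then only gesture at. Two things go wrong. First, you locate the source of the exponent bound in the low-order coefficients $t_0,t_1,\dots$ of $T=PL$ and in the trailing coefficient $\ell_0$, and then posit an unspecified ``transfer'' from $\ell_0$ to $\ell_r$ via removability. No such transfer occurs in the actual argument, and it is unclear how one would be made: $\ell_0$ enters only through Lemma~\ref{lem:2}, where it pins down the order~$n$; the exponent bound itself is extracted exclusively from the \emph{top} coefficients $t_{r+n},\dots,t_r$ of~$T$, i.e., from the identities $\sigma^n(p)^e t_{i+r}=\bar p_i\sigma^i(\ell_r)+\bar p_{i+1}\sigma^{i+1}(\ell_{r-1})+\cdots+\bar p_n\ell_{r-n}$, which involve only the leading coefficients $\ell_r,\dots,\ell_{r-n}$ of~$L$. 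Your sample estimate $e_0\le v_q(\ell_0)$ is correct but bounds the wrong quantity; it cannot produce a bound of the shape $k+n\,v_{<p}(\ell_r)$.

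Second, the actual mechanism is missing. The paper argues by contradiction: set $e=\max_i e_i$, clear denominators to $\bar P=\sigma^n(p)^e P$ with coefficients $\bar p_i=\sigma^n(p)^{e-e_i}p_i$, and observe that minimality of the $e_i$ (equivalently $\gcd(\sigma^n(p),p_i)=1$) forces $\gcd(\bar p_0,\dots,\bar p_n,\sigma^n(p))=1$. Assuming $e>k+n\,v$ with $v=v_{<p}(\ell_r)$, a downward induction from $i=n$ (where $\sigma^n(p)^{e-k}\mid\bar p_n$ and $e-k>nv$) through the displayed identities shows $\sigma^n(p)^{iv+1}\mid\bar p_i$ for all~$i$ --- each step loses at most $v$ because $\sigma^n(p)$ divides $\sigma^i(\ell_r)$ with multiplicity at most~$v$ --- and the case $i=0$ contradicts the gcd condition. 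Your narrative (``at each of the at most $n$ steps\dots\ each step contributes at most $v_{<p}$'') is a reasonable intuition for why the constant is $k+nv$, but without the clearing of denominators, the coprimality invariant that gets contradicted, and the identification of \emph{which} coefficients of $PL$ drive the induction, it does not constitute a proof.
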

\begin{proof}
  Lemmas \ref{lem:1} and~\ref{lem:2} imply the existence of an operator $P$ with all the required
  properties except possibly the exponent estimate in item~2. Let $P$ be such an operator, 
  and consider the operator $T:=\sum_{i=0}^{r+n}t_i\partial^i:= P L \in\set K[x][\partial]$. 

  Let $e=\max\{e_1,\dots,e_n\}$ and 
  $\bar P:=\sum_{i=0}^n\bar p_i\partial^i:=\sigma^n(p)^e P$.
  Then $\bar p_i = \sigma^n(p)^{e-e_i}p_i$ ($i=0,\dots,n$) 
  and $\sigma^n(p)^e T = \bar P L$
  and $\gcd(\bar p_0,\dots,\bar p_n,\sigma^n(p))=1$.

  Abbreviating $v:=v_{<p}(\lc_\partial(L))$, assume that $e>k+n\,v$. 
  We will show by induction that then $\bar p_i$ contains $\sigma^n(p)$ with
  multiplicity more than $i\,v$ for $i=n,n-1,\dots,0$, which
  is inconsistent with $\gcd(\bar p_0,\dots,\bar p_n,\sigma^n(p))=1$.

  First it is clear that $\bar p_n=\sigma^n(p)^e p_n\sigma^n(\ell_r)$ 
  contains $\sigma^n(p)$ with multiplicity $\geq e-k>nv$, because $P$
  is $p^k$-removing. 
  Suppose now as induction hypothesis that
  there is an $i\geq0$ such that $\sigma^n(p)^{j\,v+1}\mid\bar p_j$ for
  $j=n,n-1,\dots,i+1$. Consider the equality
  \[
   \sigma^n(p)^e t_{i+r} = \bar p_i\sigma^i(\ell_r)
                     + \bar p_{i+1}\sigma^{i+1}(\ell_{r-1})
                     + \cdots
                     + \bar p_n\ell_{r-n},
  \]
  where we use the convention $\ell_j:=0$ for $j<0$.
  The induction hypothesis implies that $\sigma^n(p)^{(i+1)v+1}\mid\bar p_j$ 
  for $j=n,n-1,\dots,i+1$. 
  Furthermore, since $(i+1)v\leq nv<e$, we have $\sigma^n(p)^{(i+1)v+1}\mid \sigma^n(p)^e t_{i+r}$. 
  Both facts together imply $\sigma^n(p)^{(i+1)v+1}\mid \bar p_i\sigma^i(\ell_r)$.
  The definition of~$v$ ensures that $\sigma^n(p)$ is contained in $\sigma^i(\ell_r)$ with 
  multiplicity at most~$v$, so it must be contained in $\bar p_i$ with multiplicity
  more than $(i+1)v-v=i\,v$, as claimed. 
\end{proof}

\subsection{Differential Case}

In this section $\set K[x][\partial]$ refers to the Ore algebra of differential
operators, i.e., $\sigma = \operatorname{id}$ and $\delta = \frac{d}{dx}$. Let $L \in
\set K[x][\partial]$ and suppose for simplicity that $p=x$ is a factor of $\lc_\partial(L)$.
In~\cite{abramov06b}, the authors show that $L$ can be desingularized at $x$ if and only
if $x=0$ is an \emph{apparent singularity,} that is, if and only if $L(y) = 0$
admits $\deg_\partial(L)$ linearly independent formal power series solutions.  The
authors furthermore give an algorithm to find an operator~$P$ such that
if $\xi$ is either an ordinary point of $L$ or an apparent singularity of~$L$, 
then $\xi$ is an ordinary point of $PL$. Therefore this
algorithm desingularizes all the points that can be desingularized.  The authors
also give a sharp bound for $\deg_\partial(P)$.  The authors furhtermore give
some indications concerning partial desingularizations.  It would be interesting
to give a complete algorithm for partial desingularizations.

\section{Order-Degree Curves}\label{sec:curve}

We now turn to the construction of left multiples of $L$ with polynomial
coefficients of small degree, and to the question of how small these degrees can be
made. As already indicated in Section~\ref{sec:2}, we start from an operator~$P$
which removes some factor from the leading coefficient of~$L$, say it removes a
polynomial~$p$ of degree~$k$. According to Lemma~\ref{lem:1}, we may assume that
$\lc_\partial(P)=1/\sigma^{\deg_\partial(P)}(p)$ and that all other coefficients of $P$ are rational
functions whose numerators have lower degree than the corresponding
denominators.  Thus we already have $\deg_x(PL)\leq\deg_x(L)-1$. Furthermore, if
$q$ is any polynomial with $\deg_x(q)<\deg_x(p)=k$, then multiplying $P$ by~$q$
(from left) and removing polynomial parts by Lemma~\ref{lem:1}.\ref{lem:1:3} gives another operator~$Q$ with
$\deg_x(QL)\leq\deg_x(L)-1$.  All the operators $Q$ obtained in this way form a
$\set K$-vector space of dimension~$k$. Within this vector space we search for
elements where $\deg_x(QL)$ is as small as possible. Forcing the coefficients of
the highest degrees to zero gives a certain number of linear constraints which
can be balanced with the number of degrees of freedom offered by the
coefficients of~$q$, as illustrated in the figure below.
As long as we force fewer than $k$ terms to zero, we will find a nontrivial solution.

\begin{center}
\begin{tikzpicture}[scale=.9]
  \foreach \x in {0, .3} 
     \foreach \y in {0, .3, ..., 2.2} 
        \fill (\x, \y) circle (2pt);  
  \fill (.6, 0) circle (2pt) (.6, .3) circle (2pt);
  \foreach \y in {.6, .9, ..., 1.5}
    \draw (.6, \y) circle (2pt);
  \draw[rounded corners=4pt] (-.15, 1.65) rectangle (.45, 2.25) (.45, .45) rectangle (.75, 1.65);
  \draw[->] (.45, 1.95) .. controls (.6, 1.95) .. (.6, 1.7);
\end{tikzpicture}
\end{center}

If we want to eliminate $k$ terms or more in order to get a result of even lower
degree, we need more variables. We can create $k$ more variables if instead of
an ansatz $qP$ we make an ansatz $(q_0+q_1\partial)P$ for some $q_0,q_1\in\set
K[x]$ with $\deg_x(q_0),\deg_x(q_1)<k$.  Again removing all polynomial parts
from the rational function coefficients we obtain a vector
space of operators $Q$ with $\deg_x(QL)\leq\deg_x(L)-1$ whose dimension
is~$2k$. The additional degrees of freedom can be used to eliminate more high
degree terms, the result being an operator of lower degree but higher order. If we
let the order increase further and for each fixed order use all the available
degrees of freedom to reduce the degrees to minimize the degrees of the
polynomial coefficients, a hyperbolic relationship between the order and the
degree of $QL$ emerges.
In Theorem~\ref{thm:curve} below, we make this relationship precise, taking into account that
for a given operator~$L$ the leading coefficient may contain several factors~$p$ that
are removable at different orders~$n$. The resulting region of all
points $(r,d)\in\set N^2$ for which there exists a left multiple of~$L$ of order~$r$ with
polynomial coefficients of degree at most~$d$ is then given by an overlay of a finite number
of hyperbolas. 

Before turning to the proof of this theorem, let us illustrate its basic idea with the example
operators from Section~\ref{sec:2}.

\begin{example}
  Let $L\in\set Q[x][\partial]$, $p\in\set Q[x]$, and $P\in\set Q(x)[\partial]$ be as in Section~\ref{sec:2}. 
  Recall that $p$ is an irreducible cubic factor of $\lc_\partial(L)$ and that $P$ is a $p$-removing operator for~$L$.
  We have $P=\frac{p_1}{p}\partial + \frac{p_0}{p}$ for some $p_1,p_0\in\set Q[x]$ 
  with $\deg_x(p_1)=0$ and $\deg_x(p_0)=2$. 
  We have seen in Section~\ref{sec:2} that there is an operator $Q\in\set Q(x)[\partial]$ of order~3
  such that $QL\in\set Q[x][\partial]$ and $\deg_x(QL)=1$. 
  Our goal here is to explain why this operator exists. 

  Make an ansatz $Q_1=(q_0+q_1\partial+q_2\partial^2)P$ with undetermined polynomials $q_0,q_1,q_2$.
  After expanding the product and applying commutation rules, $Q_1$~has the form 
  \begin{alignat*}1
    &\frac{p_1q_2}{p}\partial^3
    + \frac{(\ldots)q_2+(\ldots)q_1}{p^2}\partial^2\\
    &{}+ \frac{(\ldots)q_2+(\ldots)q_1+(\ldots)q_0}{p^3}\partial
    + \frac{(\ldots)q_2+(\ldots)q_1+(\ldots)q_0}{p^3},
  \end{alignat*}
  where the $(\ldots)$ are certain polynomials whose precise form is irrelevant for our purpose. 

  Note that by Lemma~\ref{lem:1}.\ref{lem:1:2}, $Q_1L\in\set Q[x][\partial]$ regardless of the choice of
  $q_0,q_1,q_2$, and that by Lemma~\ref{lem:1}.\ref{lem:1:3}, this property is not lost if we add
  to $Q_1$ some operator in $\set Q[x][\partial]$ of our choice. 
  Therefore, if $Q_2\in\set Q[x][\partial]$ is the operator obtained from $p^3 Q_1\in\set Q[x][\partial]$ 
  by reducing all the coefficients modulo~$p^3$, then $p^{-3}Q_2L\in\set Q[x][\partial]$, still regardless
  of the choice of $q_0,q_1,q_2$. 

  The coefficients of $Q_2$ depend linearly on the undetermined polynomials $q_0,q_1,q_2$. 
  If we choose their degree to be $\deg_x(p)-1=2$, then we have $3(2+1)=9$ variables for the coefficients
  of $q_0,q_1,q_2$. Choosing a higher degree would give more variables but also introduce undesired 
  solutions such as $q_0=q_1=q_2=p$, for which the reduction modulo $p^3$ leads to the useless result $Q_2=0$.
  This cannot happen if we enforce $\deg_x(q_i)<\deg_x(p)$.

  The operator $p^{-3}Q_2L$ has degree 
  \[
   \deg_x(Q_2)+\deg_x(L)-3\deg_x(p)=\deg_x(Q_2)-5,
  \]
  which is equal to $1$ if $\deg_x(Q_2)=6$.
  A priori, the degree of $Q_2$ in~$x$ may be up to $\deg_x(p^3)-1=8$.
  In order to bring it down to~6, we equate the coefficients of $x^i\partial^j$ for $i=7,8$ and
  $j=0,\dots,3$ to zero. This gives 8~equations. As there are more variables than equations, there
  must be a nontrivial solution. 

\end{example}

For formulating the proof of the general statement, it is convenient to work with an alternative
formulation of removability, which is provided in the following lemma.
Throughout the section, $\set K[x][\partial]=\set K[x][\partial;\sigma,\delta]$
is an arbitrary Ore algebra.

\begin{lemma}\label{lem:3}
  $p\in\set K[x]$ is removable from $L\in\set K[x][\partial]$ at order~$n$ 
  if and only if 
  there exists $P\in\set K[x][\partial]$ with $\deg_\partial(P)=n$ and $PL\in\sigma^n(p)\lc_{\partial}(P)\set K[x][\partial]$.
\end{lemma}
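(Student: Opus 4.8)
The plan is to prove the two implications by passing back and forth between the rational-coefficient removing operator of Definition~\ref{def:remove} and the polynomial-coefficient operator appearing here, using throughout the identity $\lc_\partial(QL)=\lc_\partial(Q)\,\sigma^{\deg_\partial(Q)}(\lc_\partial(L))$ for the leading coefficient of a product. For the \emph{if} direction I would take $P\in\set K[x][\partial]$ of order~$n$ with $PL=\sigma^n(p)\lc_\partial(P)\,M$ for some $M\in\set K[x][\partial]$, and divide out the scalar: set $\tilde P:=P/(\sigma^n(p)\lc_\partial(P))\in\set K(x)[\partial]$. Then $\tilde P$ still has order~$n$, $\tilde PL=M\in\set K[x][\partial]$, and $\lc_\partial(\tilde P)=1/\sigma^n(p)$; the product identity gives $\lc_\partial(\tilde PL)=\sigma^n(\ell_r)/\sigma^n(p)=\sigma^n(\ell_r/p)$ since $p\mid\ell_r$, whence $\sigma^{-n}(\lc_\partial(\tilde PL))=\frac1p\lc_\partial(L)$. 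This is exactly Definition~\ref{def:remove} with $v=w=1$, so $p$ is removable from~$L$ at order~$n$.

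For the \emph{only if} direction I would run this construction in reverse. Starting from removability at order~$n$, Lemma~\ref{lem:1}.\ref{lem:1:4} supplies a removing operator $P'\in\set K(x)[\partial]$ of order~$n$ normalized so that $\lc_\partial(P')=1/\sigma^n(p)$ (this follows from $p\,\sigma^{-n}(\lc_\partial(P'L))=\lc_\partial(L)$ together with the product identity), and with $P'L\in\set K[x][\partial]$. Clearing denominators by a common denominator $c\in\set K[x]$ of the coefficients of $P'$ — necessarily a multiple of $\sigma^n(p)$ — yields $P:=cP'\in\set K[x][\partial]$ of order~$n$ with $\lc_\partial(P)=c/\sigma^n(p)$, so that $\sigma^n(p)\lc_\partial(P)=c$; and since $PL=c\,(P'L)$ with $P'L$ polynomial, every coefficient of $PL$ is divisible by~$c$, i.e.\ $PL\in\sigma^n(p)\lc_\partial(P)\,\set K[x][\partial]$.

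I expect no genuine obstacle here: once Lemma~\ref{lem:1}.\ref{lem:1:4} is invoked to secure the clean leading coefficient $1/\sigma^n(p)$, both directions amount to bookkeeping. The only points demanding care are getting the direction of the shift $\sigma^{\pm n}$ right when relating $\lc_\partial(P)$ to $\lc_\partial(PL)$, and observing that left multiplication by a scalar in $\set K[x]$ merely scales the coefficients of an operator, so that membership in $c\,\set K[x][\partial]$ is equivalent to divisibility of all coefficients by~$c$.
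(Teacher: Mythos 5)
Your proof is correct and follows essentially the same route as the paper: for ``$\Leftarrow$'' the paper likewise divides $P$ by the scalar $\sigma^n(p)\lc_\partial(P)$ to obtain a removing operator, and for ``$\Rightarrow$'' it clears denominators from a removing operator normalized to have leading coefficient $1/\sigma^n(p)$ (the paper multiplies by $\sigma^n(p)^e$ using the canonical form with denominators that are powers of $\sigma^n(p)$, while you use a generic common denominator $c$ -- an immaterial difference). Your extra care in checking that $\lc_\partial(P')=1/\sigma^n(p)$ follows from Lemma~\ref{lem:1}.\ref{lem:1:4} via the product formula for leading coefficients is a detail the paper leaves implicit, and it is handled correctly.
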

\begin{proof}
  ``$\Leftarrow$'': $P_0=\frac{1}{\sigma^n(p)\lc_{\partial}(P)}P$ is a $p$-removing operator.   

  ``$\Rightarrow$'': 
  Start from a $p$-removing operator of the form
  \[
    P_0 = \sum_{i=0}^{n-1} \frac{p_i}{\sigma^n(p)^{e_i}}\partial^i + \frac1{\sigma^n(p)}\partial^n,
  \]
  and set $P=\sigma^n(p)^e P_0$ where $e=\max\{e_0,\dots,e_{n-1},1\}\geq 1$. 
  Because of $P_0L\in\set K[x][\partial]$ it follows that 
  \begin{alignat*}1
    PL&\in\sigma^n(p)^e\set K[x][\partial]=\sigma^n(p)\lc_\partial(P)\set K[x][\partial].
    \rlap{\quad\qed} 
  \end{alignat*}\def\qed{} 
\end{proof}

The next lemma is a generalization of Bezout's relation to more than two coprime polynomials,
which we will also need in the proof. 

\begin{lemma}\label{lem:4}
  Let $u_1,\dots,u_m\in\set K[x]$ be pairwise coprime and $u=u_1u_2\cdots u_m$,
  and let $v_1,\dots,v_m\in\set K[x]$ be such that $\deg_x(v_i)<\deg_x(u_i)$ ($i=1,\dots,m$). 
  If 
  \[
    \sum_{i=1}^m v_i\frac{u}{u_i}=0
  \]
  then $v_1=v_2=\cdots=v_m=0$. 
\end{lemma}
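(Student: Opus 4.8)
The plan is to recognize this as the uniqueness statement underlying the Chinese Remainder Theorem (equivalently, the uniqueness of partial fraction decompositions), and to prove it by isolating one summand at a time via reduction modulo the corresponding factor. The whole argument is short, so rather than an induction on~$m$ I would give a direct one-shot reduction.

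First I would fix an index $j\in\{1,\dots,m\}$ and reduce the hypothesis $\sum_{i=1}^m v_i\,\frac{u}{u_i}=0$ modulo~$u_j$. The key observation is that for every $i\neq j$ the quotient $\frac{u}{u_i}=\prod_{\ell\neq i}u_\ell$ still contains the factor~$u_j$, so $u_j\mid\frac{u}{u_i}$ and the term $v_i\,\frac{u}{u_i}$ vanishes modulo~$u_j$. Since the entire sum is identically zero, the reduction leaves only the congruence $v_j\,\frac{u}{u_j}\equiv 0\pmod{u_j}$, that is, $u_j\mid v_j\,\frac{u}{u_j}$.

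Next I would exploit coprimality. Because $u_1,\dots,u_m$ are pairwise coprime, $u_j$ is coprime to each $u_\ell$ with $\ell\neq j$, hence coprime to their product $\frac{u}{u_j}=\prod_{\ell\neq j}u_\ell$. From $u_j\mid v_j\,\frac{u}{u_j}$ together with $\gcd\bigl(u_j,\frac{u}{u_j}\bigr)=1$, Euclid's lemma yields $u_j\mid v_j$. Finally, the degree hypothesis $\deg_x(v_j)<\deg_x(u_j)$ forces $v_j=0$, since any nonzero multiple of~$u_j$ has degree at least $\deg_x(u_j)$. As $j$ was arbitrary, all the $v_i$ vanish.

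I do not expect a genuine obstacle here; the only point meriting a moment's care is verifying that $u_j$ really divides $\frac{u}{u_i}$ for every $i\neq j$, so that precisely the $j$-th summand survives the reduction modulo~$u_j$. This holds exactly because $u_j$ is one of the factors appearing in each such product, which is where pairwise coprimality (ensuring $u_j$ is \emph{not} cancelled against $u_i$) is implicitly used.
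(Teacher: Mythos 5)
Your proof is correct and follows essentially the same route as the paper: both observe that $u_j$ divides every summand $v_i\,u/u_i$ with $i\neq j$, deduce $u_j\mid v_j\,(u/u_j)$, use pairwise coprimality to get $u_j\mid v_j$, and conclude from the degree bound. Your explicit appeal to $\gcd(u_j,u/u_j)=1$ and Euclid's lemma is in fact a slightly more careful justification of the step the paper states as ``$u_i\nmid u/u_i$''.
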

\begin{proof}
  Since the $u_i$ are pairwise coprime, $u_i\nmid u/u_i$ for all~$i$.
  However, $u_i\mid u/u_j$ for all $j\neq i$. Both facts together with 
  $\sum_{i=1}^m v_i u/u_i = 0$ imply that $u_i\mid v_i$ for all~$i$.
  Since $\deg_x(v_i)<\deg_x(u_i)$, the claim follows. 
\end{proof}

\begin{theorem}\label{thm:curve}
  Let $L\in\set K[x][\partial]$, and let $p_1,\dots,p_m\in\set K[x]$ be 
  factors of $\lc_\partial(L)$ which are removable at orders $n_1,\dots,n_m$, respectively,
  so that the $\sigma^{n_i}(p_i)$ are pairwise coprime. 
  Let $r\geq\deg_\partial(L)$ and
  \[
    d\geq\deg_x(L) - \biggl\lceil\sum_{i=1}^m\Bigl(1-\frac{n_i}{r-\deg_\partial(L)+1}\Bigr)^{\!+}\deg_x(p_i)\biggr\rceil,
  \]
  where we use the notation $(x)^+:=\max\{x,0\}$.
  Then there exists an operator $Q\in\set K(x)[\partial]\setminus\{0\}$ 
  such that $QL\in\set K[x][\partial]$ and $\deg_\partial(QL)=r$ and $\deg_x(QL)=d$.
\end{theorem}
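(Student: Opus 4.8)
The plan is to realize $Q$ as a small polynomial perturbation of a linear combination of removing operators for the individual factors, balancing degrees of freedom against linear constraints. Write $\rho:=\deg_\partial(L)$ and $t:=r-\rho+1$, and call $p_i$ \emph{usable} if $s_i:=r-\rho-n_i\ge0$; the $(\cdot)^+$-truncation in the degree bound discards exactly the non-usable factors, so I may restrict attention to usable ones. For each usable~$i$, Lemma~\ref{lem:1} (together with the normalization discussed after it) provides a $p_i$-removing operator $\tilde P_i$ of order~$n_i$ with $\lc_\partial(\tilde P_i)=1/\sigma^{n_i}(p_i)$ whose remaining coefficients are proper fractions with denominators powers of $\sigma^{n_i}(p_i)$.

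First I would set up the degrees of freedom, following the idea illustrated in the example before this theorem. Introduce multipliers $R_i=\sum_{l=0}^{s_i}q_{i,l}\partial^l$ with undetermined $q_{i,l}\in\set K[x]$ of degree $<\deg_x(p_i)$, and put $Q_1:=\sum_i R_i\tilde P_i$, an operator of order $r-\rho$ depending linearly on the $\sum_i(s_i+1)\deg_x(p_i)$ unknowns. By Lemma~\ref{lem:1}.\ref{lem:1:2} each $R_i\tilde P_i$, hence $Q_1$, remains $p_i$-removing, so $Q_1L\in\set K[x][\partial]$ for every choice of the unknowns. Let $D$ be a common denominator of $Q_1$, reduce the coefficients of $DQ_1$ modulo~$D$ to obtain $Q_2\in\set K[x][\partial]$, and set $Q:=\tfrac1D Q_2=Q_1-B$ with $B\in\set K[x][\partial]$. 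Then by Lemma~\ref{lem:1}.\ref{lem:1:3}, $QL=Q_1L-BL\in\set K[x][\partial]$ for every choice of the unknowns, and $\deg_x(QL)=\deg_x(Q_2L)-\deg_x(D)$; since $\deg_x(Q_2)<\deg_x(D)$ this already yields the ``free'' drop $\deg_x(QL)\le\deg_x(L)-1$.

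Next I would impose $\deg_x(QL)\le d$. This means forcing the coefficients of $x^i$ to vanish for $i$ in a range of length $\deg_x(L)-1-d$ in each of the $t$ operator-coefficients of~$Q_2$, a total of $t\,(\deg_x(L)-1-d)$ homogeneous linear equations in the unknowns; note that $\deg_x(D)$ cancels, so its exact value is irrelevant. This system has a nontrivial solution as soon as
\[
  t\,(\deg_x(L)-1-d)<\sum_i (s_i+1)\deg_x(p_i),
\]
and, since $(s_i+1)/t=1-n_i/t$, rearranging reproduces precisely the bound $d\ge\deg_x(L)-\lceil\sum_i(1-n_i/t)^+\deg_x(p_i)\rceil$ assumed in the theorem.

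It remains to turn a nontrivial solution into a multiple of the exact prescribed order and degree, which splits into one genuinely delicate point and one routine one. The hard part will be showing that a nonzero choice of the $q_{i,l}$ forces $Q\ne0$, equivalently $Q_1\notin\set K[x][\partial]$, since otherwise the reduction could annihilate a nonzero solution. Reading off the coefficient of $\partial^{r-\rho}$ in $Q_1$ gives $\sum_i q_{i,s_i}/\sigma^{r-\rho}(p_i)$, and descending to lower orders gives analogous fractional expressions; the claim that these cannot all be polynomial unless every $q_{i,l}$ vanishes is exactly a Bézout-type statement of the kind supplied by Lemma~\ref{lem:4}. The crux is checking that the denominators contributed by distinct factors stay pairwise coprime after the shifts introduced by the $\partial^l$, so that Lemma~\ref{lem:4} applies; this is where the hypothesis that the $\sigma^{n_i}(p_i)$ are pairwise coprime is used and must be propagated through those shifts. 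Granting this independence, the parameter-to-operator map is injective, so the nonzero solution gives $Q\ne0$ and hence $M:=QL\ne0$. Finally, to upgrade $\deg_\partial(M)\le r$ and $\deg_x(M)\le d$ to equalities I would left-multiply by a monomial: $x^a\partial^bM=(x^a\partial^bQ)L$ is again a left multiple of~$L$, and choosing $b$ to raise the order to exactly~$r$ and then $a$ to raise the degree to exactly~$d$ completes the construction.
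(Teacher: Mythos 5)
Your strategy is the paper's strategy: the same ansatz $\sum_i R_i\tilde P_i$ with polynomial multipliers of degree $<\deg_x(p_i)$, the same reduction modulo a common denominator justified by Lemma~\ref{lem:1}, the same count of $\sum_i(s_i+1)\deg_x(p_i)$ unknowns against $t(\deg_x(L)-1-d)$ equations (your remark that the degree of the common denominator cancels out of the count is correct, and is in fact tidier than the paper's explicit bookkeeping with $\deg_x(q)$), and the same appeal to Lemma~\ref{lem:4} for injectivity of the parameter-to-operator map. The closing device of left-multiplying by $x^a\partial^b$ to force order exactly $r$ and degree exactly $d$ is a sensible addition that the paper leaves implicit.

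The gap is exactly the step you label ``the crux'' and then grant. The coefficient of $\partial^{r-\rho}$ in $Q_1$ is $\sum_i q_{i,s_i}/\sigma^{r-\rho}(p_i)$, and to invoke Lemma~\ref{lem:4} you need the denominators $\sigma^{r-\rho}(p_i)$ to be pairwise coprime. But $\sigma^{r-\rho}(p_i)=\sigma^{s_i}\bigl(\sigma^{n_i}(p_i)\bigr)$ with $s_i=r-\rho-n_i$ varying with $i$: the hypothesis gives $\gcd\bigl(\sigma^{n_i}(p_i),\sigma^{n_j}(p_j)\bigr)=1$, which is \emph{not} preserved when the two arguments are shifted by different amounts. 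Concretely, nothing in the theorem forbids $p_1=p_2=x$ with $n_1=0$, $n_2=1$ (so that $\sigma^{n_1}(p_1)=x$ and $\sigma^{n_2}(p_2)=x+1$ are coprime); then $\sigma^{r-\rho}(p_1)=\sigma^{r-\rho}(p_2)$ and the top coefficient of $Q_1$ vanishes for the nonzero choice $q_{2,s_2}=-q_{1,s_1}$, so your injectivity argument breaks at the very first layer, and descending to lower orders mixes contributions from different $i$ and $l$ in a way Lemma~\ref{lem:4} does not obviously control. The paper sidesteps this by grouping by the multiplier index instead of by the power of $\partial$: it takes $k$ maximal with some $q_{i,k}\neq0$ and applies Lemma~\ref{lem:4} to $\sum_i q_{i,k}\,q/\sigma^{k+n_i}(p_i)$, whose denominators are the \emph{common} shift $\sigma^k$ of the pairwise coprime family $\sigma^{n_i}(p_i)$ and hence remain pairwise coprime. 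To complete your proof you must reorganize the nonvanishing argument along these lines (or strengthen the hypothesis to require the $p_i$ themselves to be pairwise coprime); as written, the one step you did not verify is the one that fails.
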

\begin{proof}
  Let $r\geq\deg_\partial(L)$, and set $s:=r-\deg_\partial(L)$ so that $s=\deg_\partial(Q)$.
  We may assume without loss of generality that $s$ is such that 
  $1-\frac{n_i}{r-\deg_\partial(L)+1}=1-\frac{n_i}{s+1}>0$ for all $i$
  by simply removing all the $p_i$ for which $1-\frac{n_i}{s+1}\leq 0$ from consideration.
  We thus have $s\geq n_i$ for all~$i$.

  Lemma~\ref{lem:3} yields operators $P_i\in\set K[x][\partial]$
  of order~$n_i$ with $P_iL\in\sigma^{n_i}(p_i)\lc(P_i)\set K[x][\partial]$. 
  Set 
  \[ 
    q=\prod_{i=1}^m\prod_{j=0}^{s-n_i}\sigma^{j+n_i}(p_i)\sigma^j(l_i),
  \]
  where $l_i=\lc_\partial(P_i)$. Consider the ansatz
  \[
    Q_1 = \sum_{i=1}^m\sum_{j=0}^{s-n_i} q_{i,j} \frac{q}{\sigma^{j+n_i}(p_i)\sigma^j(l_i)}\partial^j P_i
  \]
  for undetermined polynomial coefficients $q_{i,j}$ ($i=1,\dots,m$; $j=0,\dots,n_i$) of degree less than $\deg_x(p_i)$.
  Regardless of the choice of these coefficients, we will always have $Q_1\in\set K[x][\partial]$
  and $Q_1L\in q\set K[x][\partial]$. Also, for arbitrary $R\in\set K[x][\partial]$ and 
  $Q_2=Q_1-qR$ we have $Q_2\in\set K[x][\partial]$ and $Q_2L\in q\set K[x][\partial]$. This
  means that we can replace the coefficients in $Q_1$ by their remainders upon division by $q$ 
  without violating any of the mentioned properties of~$Q_1$. 

  Also observe that any operator $Q_2$ obtained in this way is nonzero unless
  all the $q_{i,j}$ are zero, because if $k$ is maximal such that at least 
  one of the $q_{i,k}$ is nonzero, then
  \[
   \lc_\partial(Q_1)=\sum_{i=1}^m q_{i,k}\frac{q}{\sigma^{k+n_i}(p_i)\sigma^k(l_i)}\sigma^k(l_i)
      =\sum_{i=1}^m q_{i,k}\frac{q}{\sigma^{k+n_i}(p_i)}
  \]
  is nonzero by Lemma~\ref{lem:4}. Furthermore, $\lc_\partial(Q_1)\not\equiv0\bmod q$
  because $\deg_x(q_{i,k})<\deg_x(p_i)$ implies 
  $\deg_x(\lc_\partial(Q_1))<\deg_x(q)$. 
  
  The ansatz for the $q_{i,j}$ gives $\sum_{i=1}^m(s-n_i+1)\deg_x(p_i)$
  variables. Plug this ansatz into $Q_1$ and reduce all the polynomial coefficients modulo~$q$,
  obtaining an operator $Q_2$ of degree less than $\deg_x(q)=\sum_{i=1}^m(s-n_i+1)(\deg_x(p_i)+\deg_x(l_i))$.
  Then for each of the $s+1$ polynomial coefficients in $Q_2$ equate the 
  coefficients of the terms $x^j$ for
  \[
    j > \sum_{i=1}^m(s-n_i)\bigl(\deg_x(p_i)+\deg_x(l_i)\bigr)
      + \biggl\lfloor\frac{\sum_{i=1}^m n_i\deg_x(p_i)}{s+1}\biggr\rfloor
  \]
  to zero. This gives altogether 
  \begin{alignat*}1
    &(s+1)\biggl(\sum_{i=1}^m\bigl( (s{-}n_i{+}1)\bigl(\deg_x(p_i){+}\deg_x(l_i)\bigr) - 1 - \deg_x(l_i)\bigr)\\
    &- \sum_{i=1}^m (s{-}n_i)\bigl(\deg_x(p_i){+}\deg_x(l_i)\bigr) 
             - \biggl\lfloor\frac{\sum_{i=1}^m n_i\deg_x(p_i)}{s+1}\biggr\rfloor\biggr)\\
    &=(s+1)\biggl(\sum_{i=1}^m \deg_x(p_i) - 1 - \biggl\lfloor\frac{\sum_{i=1}^m n_i\deg_x(p_i)}{s+1}\biggr\rfloor\biggr)
  \end{alignat*}
  equations. The resulting linear system has a nontrivial solution because
  \begin{alignat*}1
       &\mathrm{\#vars}-\mathrm{\#eqns} \\
    ={}& \sum_{i=1}^m (s-n_i+1)\deg_x(p_i)\\
       &\quad{}
       - (s+1)\biggl(\sum_{i=1}^m\deg_x(p_i) - 1 - \biggl\lfloor\frac{\sum_{i=1}^m n_i\deg_x(p_i)}{s+1}\biggr\rfloor\biggr) \\
    ={}& - \sum_{i=1}^m n_i\deg_x(p_i) - (s+1)\biggl(-1-\biggl\lfloor\frac{\sum_{i=1}^m n_i\deg_x(p_i)}{s+1}\biggr\rfloor\biggr)\\
    >{}& - \sum_{i=1}^m n_i\deg_x(p_i) + \frac{s+1}{s+1}\sum_{i=1}^m n_i\deg_x(p_i) = 0.
  \end{alignat*}
  By construction, the solution gives rise to an operator~$Q_2\in\set K[x][\partial]$ of
  order at most $n$ with polynomial coefficients of degree at most 
  \[
    \sum_{i=1}^m (s-n_i)(\deg_x(p_i) + \deg_x(l_i)) + \biggl\lfloor\frac{\sum_{i=1}^m n_i\deg_x(p_i)}{s+1}\biggr\rfloor,
  \]
  for which $Q_2L\in q\set K[x][\partial]$. Thus if we set $Q=\frac1{q}Q_2\in\set K(x)[\partial]$,
  we have $\deg_\partial(QL)=\deg_\partial(L)+s=r$ and $\deg_x(QL)$ is at most
  \begin{alignat*}1
    & \deg_x(L) + \deg_x(Q_2)-\deg_x(q) \\
    &\leq \deg_x(L) + 
    \sum_{i=1}^m (s-n_i)(\deg_x(p_i) + \deg_x(l_i))\\
    &\quad{} + \biggl\lfloor\frac{\sum_{i=1}^m n_i\deg_x(p_i)}{s+1}\biggr\rfloor\\
    &\qquad{} - \sum_{i=1}^m (s-n_i+1)(\deg_x(p_i)+\deg_x(l_i)) \\
    &\leq \deg_x(L) - \sum_{i=1}^m \deg_x(p_i) + \biggl\lfloor\frac{\sum_{i=1}^m n_i\deg_x(p_i)}{s+1}\biggr\rfloor\\
    &=\deg_x(L) - \biggl\lceil\sum_{i=1}^m \Bigl(1{-}\frac {n_i}{s+1}\Bigr)\deg_x(p_i)\biggr\rceil, 
  \end{alignat*}
  as required. (The final step uses the facts $\lfloor -x\rfloor=-\lceil x\rceil$ 
  and $\lceil x+n\rceil=\lceil x\rceil+n$ for $x\in\set R$ and $n\in\set Z$.)
\end{proof}

\begin{example}
  \begin{enumerate}
  \item 
  Consider again the example from Section~\ref{sec:2}.
  There we started from an operator $L\in\set K[x][\partial]$ of order~2
  and degree~4
  for which there exists a desingularizing operator~$P$ of order~1
  which removes a polynomial~$p$ of degree~3.
  According to the theorem, for every $r\geq2$ exists an operator 
  $Q\in\set K[x][\partial]$ with $QL\in\set K[x][\partial]$,
  $\deg_\partial(QL) \leq r$ and
  \begin{alignat*}1
    d:=\deg_x(QL) &\leq 4 - \Bigl(1 - \frac1{r-2+1}\Bigr)^{\!+}3= \frac{r+2}{r-1}.
  \end{alignat*}
  This hyperbola precisely predicts the order-degree pairs we found in 
  Section~\ref{sec:2}:
  \begin{center}
    \begin{tabular}{c|ccccccccccc}
      $r$ & 2 & 3 & 4 & 5 & 6 & 7 & 8 & 9 \\\hline
      $d$ & 4 & 2 & 2 & 1 & 1 & 1 & 1 & 1
    \end{tabular}
  \end{center}
  \item 
  Consider the sequence $(a_n)_{n=0}^\infty$ defined by 
  \[
    a_n=\sum_k \frac{\Gamma(2n+k)\Gamma(n-k+2)}{\Gamma(2n-k)\Gamma(n+2k)}\quad(n\in\set N).
  \]
  Zeilberger's algorithm finds an annihilating operator~$L$ of the form
  \begin{alignat*}1
    &9(1+n)(1+3n)(2+3n)^2(3n+4)p(n+1)\\
    &\quad{}+ (\text{\dots degree~16\dots})\partial+ (\text{\dots degree~15\dots})\partial^2\\
    &\qquad{}- 10 n(8+5n)(9+5n)(11+5n)(12+5n)p(n)\partial^3,
  \end{alignat*}
  where $\partial$ represents the shift operator and $p$~is a certain irreducible polynomial of degree~10.
  This polynomial is removable of order~1. Therefore, by the theorem, we expect left multiples 
  of~$L$ of order $r$ and degree bounded by
  \[
    16 - \Bigl(1 - \frac1{r-3+1}\Bigr)^{\!+}10 = \frac{6r-2}{r-2}. 
  \]
  In the figure below, the curve $d=\frac{6r-2}{r-2}$ (solid) is contrasted with
  the estimate $d=\frac{8r-1}{r-2}$ (dashed) derived last year for this example~\cite{chen12c}
  as well as the region of all points $(r,d)$ for which a left multiple of $L$ 
  of order~$r$ and degree~$d$ exists (gray). 
  The new curve matches precisely the boundary of the gray region, even including the 
  very last degree drop (which is not clearly visible on the figure): for $r=12$
  we have $\frac{6r-2}{r-2}=7$ and for $r=13$ we have $\frac{6r-2}{r-2}\approx6.9<7$.


  \medskip
  \centerline{\qquad\includegraphics[width=.5\hsize]{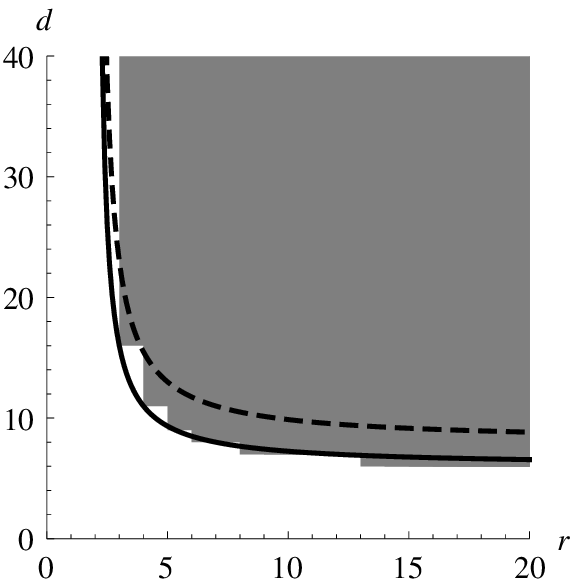}}

\item Consider the minimal order telescoper~$L$ for the hyperexponential
  term in Example~15.2 in~\cite{chen12b}.  It has order~3 and degree~40.  The
  leading coefficient contains an irreducible polynomial $p$ of degree~23 at
  order~1 and otherwise only non-removable factors.  Theorem~\ref{thm:curve}
  therefore predicts left multiples of $L$ of degree $r$ and degree
    \[
      40 - \Bigl(1-\frac1{r-3+1}\Bigr)^{\!+}23 = \frac{17r-11}{r-2}
    \]
    for all $r\in\set N$. Again, this estimate is accurate, while the estimate
    $\frac{24r-9}{r-2}$ derived in~\cite{chen12b}  overshoots.

    \medskip
    \centerline{\qquad\includegraphics[width=.5\hsize]{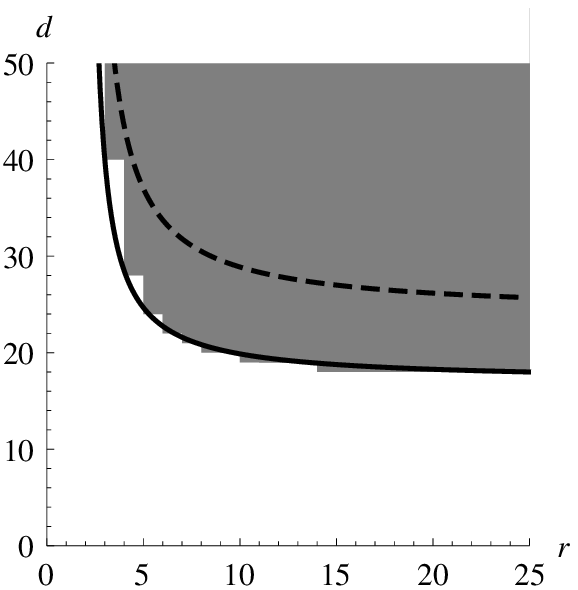}}    


  \item Operators coming from applications tend to have leading coefficients that contain a single
    irreducible polynomial of large degree which can be removed at order~1, besides factors that
    are not removable. But Theorem~\ref{thm:curve} also covers the more general situation of 
    factors that are only removable of higher order, and even the case of several polynomials
    that are removable at several orders. As an example for this general situation, consider
    the operator
    \begin{alignat*}1
      L &= 8(1{+}x)(1{+}2x)^3(37{+}3z)^7 (14{+}32x{+}26x^2{+}7x^3)^7 \\
        &\quad{}-9(1{+}3x)^9(2{+}3x)^2(1{+}x{+}5x^2{+}7x^3)^7\partial,
    \end{alignat*}
    where $\partial$ represents the shift operator. From its leading
    coefficient, the polynomial $(1 + x + 5x^2 + 7x^3)^7$ is removable at
    order~1, and in addition, $(1+3x)^7$ is removable at order~12. The remaining
    factors are not removable. According to Theorem~\ref{thm:curve} we expect
    that $L$ admits left multiples of order~$r$ and degree
    \[
      32 - 21\Bigl(1-\frac1{r}\Bigr)^{\!+} - 7\Bigl(1-\frac3{1}\Bigr)^{\!+},
    \]
    for all $r\in\set N$. It turns out that this prediction is again accurate for every~$r$.
    Observe that in this example the curve is a superposition of two hyperbolas. 

    \medskip
    \centerline{\qquad\includegraphics[width=.5\hsize]{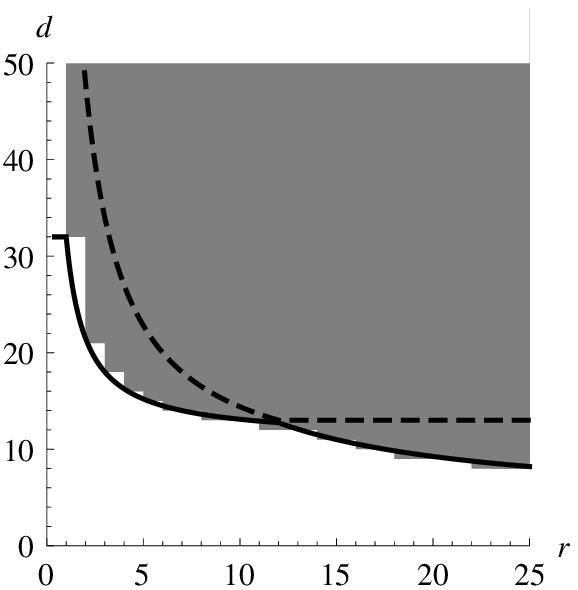}}    


  \end{enumerate}
\end{example}

In conclusion, we believe that removable factors provide a universal explanation
for all the order-degree curves that have been observed in recent years for
various different contexts.  We have derived a formula for the boundary of the
gray region associated to a fixed operator~$L$, which, although formally only a
bound, happens to be exact in all the examples we considered.  This does not
immediately imply better complexity estimates or faster variants of algorithms
exploiting the phenomenon of order-degree curves, because usually $L$ is not
known in advance but rather the desired output of a calculation, and therefore
we usually have no information about the removable factors
of~$\lc_\partial(L)$.  However, we now know what we have to look at: in order to
improve algorithms based on trading order for degree, we need to develop a
theory which provides a priori information about the removable factors
of~$\lc_\partial(L)$.  In other words, our result reduces the task of better
understanding order-degree curves to the task of better understanding what
causes the appearance of removable factors in operators coming from
applications.

\vfill\pagebreak


\begin{thebibliography}{1}

\bibitem{abramov06b}
Sergei~A. Abramov, Moulay~A. Barkatou, and Mark van Hoeij.
\newblock Apparent singularities of linear difference equations with polynomial
  coefficients.
\newblock {\em AAECC}, 17:117--133, 2006.

\bibitem{abramov99b}
Sergei~A. Abramov and Mark van Hoeij.
\newblock Desingularization of linear difference operators with polynomial
  coefficients.
\newblock In {\em Proceedings of ISSAC'99}, pages 269--275, 1999.

\bibitem{bostan12b}
Alin Bostan, Frederic Chyzak, Ziming Li, and Bruno Salvy.
\newblock Fast computation of common left multiples of linear ordinary
  differential operators.
\newblock In {\em Proceedings of ISSAC'12}, pages 99--106, 2012.

\bibitem{bostan07}
Alin Bostan, Fr{\'e}d{\'e}ric Chyzak, Bruno Salvy, Gr{\'e}goire Lecerf, and
  {\'E}ric Schost.
\newblock Differential equations for algebraic functions.
\newblock In {\em Proceedings of ISSAC'07}, pages 25--32, 2007.

\bibitem{bronstein96}
Manuel Bronstein and Marko Petkov{\v s}ek.
\newblock An introduction to pseudo-linear algebra.
\newblock {\em Theoretical Computer Science}, 157(1):3--33, 1996.

\bibitem{chen12c}
Shaoshi Chen and Manuel Kauers.
\newblock Order-degree curves for hypergeometric creative telescoping.
\newblock In {\em Proceedings of ISSAC'12}, pages 122--129, 2012.

\bibitem{chen12b}
Shaoshi Chen and Manuel Kauers.
\newblock Trading order for degree in creative telescoping.
\newblock {\em Journal of Symbolic Computation}, 47(8):968--995, 2012.

\bibitem{chyzak10}
Frederic Chyzak, Philippe Dumas, Ha~Le, Jose Martin, Marni Mishna, and Bruno
  Salvy.
\newblock Taming apparent singularities via {O}re closure.
\newblock in preparation.

\bibitem{ince26}
E.~L. Ince.
\newblock {\em Ordinary Differential Equations}.
\newblock Dover, 1926.

\end{thebibliography}

\end{document}